\newcommand{\NN}{\mathbb{N}}
\title{Natural Colors of Infinite Words}
\author{Rüdiger Ehlers}{Clausthal University of Technology, Germany \and \url{https://www.ruediger-ehlers.de/} }{ruediger.ehlers@tu-clausthal.de}{https://orcid.org/0000-0002-8315-1431}{%
This work was supported by the German Science Foundation (DFG) under Grant No.~322591867.
}%
\author{Sven Schewe}{University of Liverpool, United Kingdom\and \url{https://cgi.csc.liv.ac.uk/~sven/}}{sven.schewe@liverpool.ac.uk}{https://orcid.org/0000-0002-9093-9518}{%
\includegraphics[height=8pt]{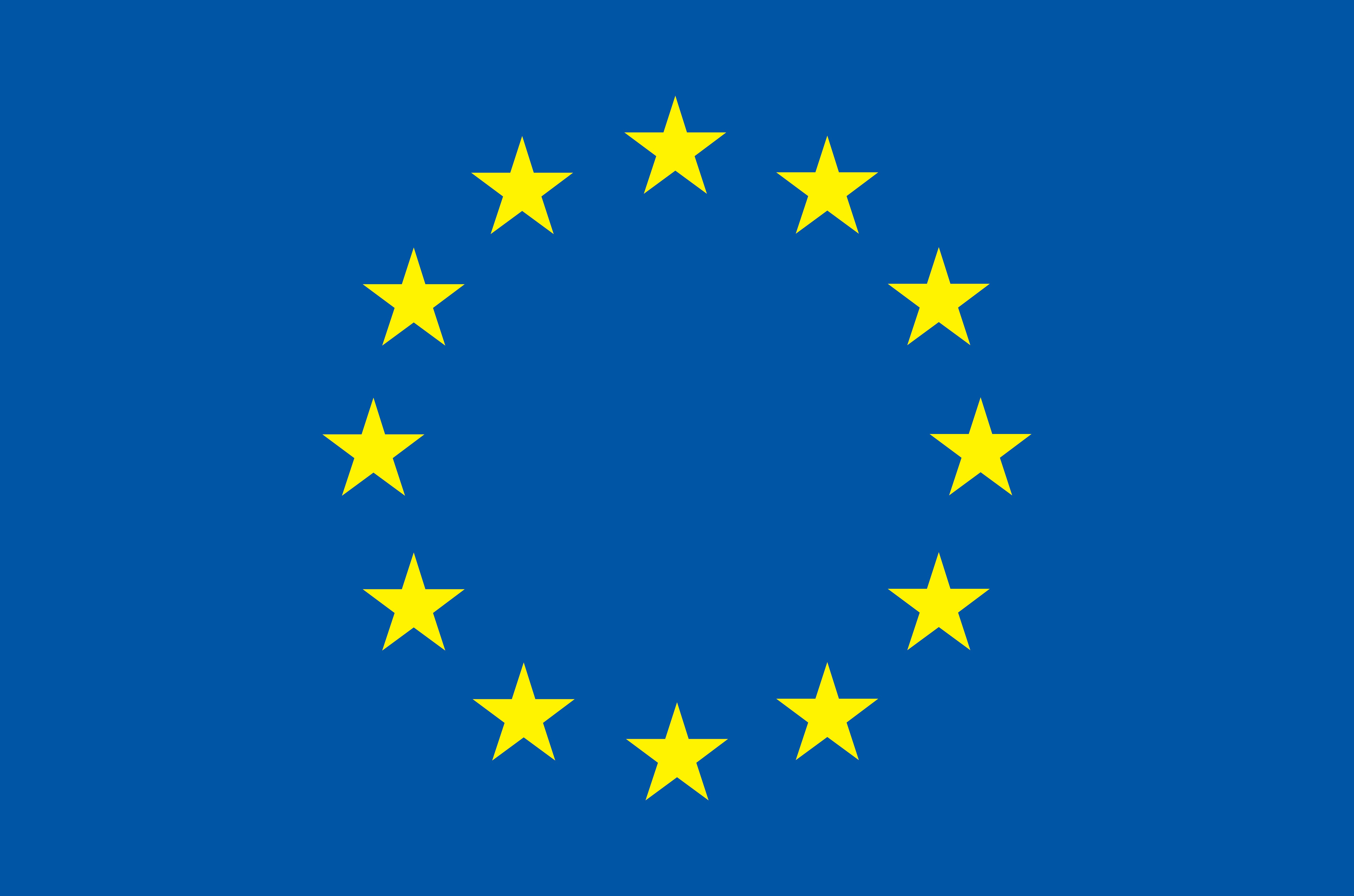} 
This project has received funding from the European Union’s Horizon 2020 research and innovation programme under grant agreement 956123 (FOCETA).%
}
\authorrunning{R\"udiger Ehlers and Sven Schewe} %
\keywords{parity automata, automata over infinite words, $\omega$-regular languages} %
\begin{document}

\maketitle

\begin{abstract}
While finite automata have minimal DFAs as a simple and natural normal form, deterministic omega-automata do not currently have anything similar. One reason for this is that a normal form for omega-regular languages has to speak about more than acceptance -- for example, to have a normal form for a parity language, it should relate every infinite word to some natural color {{for this} language}. 
This raises the question of whether or not a concept such as a natural color of an infinite word {({for a given} language)} exists, and, if it does, how it relates back to automata.

We define the natural color of a word purely based on an omega-regular language, and show how this natural color can be traced back from any deterministic parity automaton after two cheap and simple automaton transformations. The resulting \emph{streamlined} automaton does not necessarily accept every word with its natural color, but it has a `co-run', which is like a run, but can \emph{once} move to a language equivalent state, whose 
color is the natural color, and no co-run with a higher color exists.

The streamlined automaton defines, for every color $c$, a good-for-games co-Büchi automaton that recognizes the words whose natural colors {w.r.t.~the represented language} are at least $c$. This provides a canonical representation for every $\omega$-regular language, because good-for-games co-Büchi automata have a canonical minimal---and cheap to obtain---representation for every co-B\"uchi language.
\end{abstract}

\section{Introduction}
\label{sec:introduction}

A classical question in the theory of automata is how to define \emph{canonical representations} of regular languages.
Such a representation, typically in the form of an automaton, for a language has several advantages.
For once, different canonical automata must define different languages.
But reasonably defined canonical automata are also concise, and normally a minimal (and thereby natural) representative of all language equivalent automata of the same type, which makes them a natural representative of the language they recognize.

Such definitions of canonicity build on---or deliver---insights into the possible structure of an automaton for a given language. For instance, canonical deterministic automata over finite words have exactly one state per (reachable) suffix language, and
the Myhill-Nerode automaton minimization procedure is able to translate every deterministic automaton over finite words into its canonical form in polynomial time \cite{HOPCROFT1971189}.
The concepts that underpin insightful canonicity definitions often give rise to efficient minimization procedures, which makes it attractive to apply them in practical applications.
In turn, such concepts are useful in concisely defining a language, including for practical applications like learning, model checking, or synthesis.

For regular languages over infinite words, obtaining insightful canonical forms has remained a challenge.
Such languages are useful for reasoning about \emph{reactive systems}, i.e., computational systems that continuously read from an input stream while producing an output stream.
While Löding~\cite{DBLP:journals/ipl/Loding01} gave a construction for computing canonical and minimal deterministic \emph{weak} automata, which can encode some such languages, this automata class is very restricted in that in every strongly connected component, either all states are accepting or all states are rejecting.
This means that simple languages such as `there are (in)finitely many $a$s in the word' cannot be represented by them.

After the result by Löding \cite{DBLP:journals/ipl/Loding01}, there was, for quite a while, little progress on canonical forms for more expressive subclasses of $\omega$-regular languages. This is partially rooted in the fact that deterministic Büchi (and co-Büchi) automata---which are among the simplest $\omega$-automaton types and {cannot even} capture all $\omega$-regular languages---have an NP-complete minimization problem \cite{DBLP:conf/fsttcs/Schewe10}.
This shows that, unlike in the case of languages over finite words, deterministic automata cannot be used for defining a canonical form that is easy to compute.

Only very recently, Abu Radi and Kupferman \cite{DBLP:conf/icalp/RadiK19} observed that, via a slight generalization from deterministic co-Büchi automata to good-for-games \cite{DBLP:conf/csl/HenzingerP06} co-Büchi automata, we obtain an automaton model for co-Büchi languages that permits a polynomial-time minimization procedure;
this gives rise to an insightful canonical form.
Good-for-games co-Büchi automata (and similarly, good-for-games parity automata with a fixed set of colors) are not more expressive than deterministic automata with the same acceptance condition \cite{DBLP:conf/csl/HenzingerP06}, but they can be more concise.
Interestingly, this added conciseness is what enables polynomial-time minimization and thereby efficiently computing canonical automata.
{In the canonical minimal automata computed using the construction by Abu Radi and Kupferman \cite{DBLP:conf/icalp/RadiK19}, non-determinism only appears along \emph{rejecting} transitions that connect different strongly connected components {that consist} only of accepting transitions. Hence, the different deterministic strongly connected components represent the different ways in which a word can be accepted and hence provide insight into the structure of the represented language.}

{The result by Abu Radi and Kupferman} raises the question whether this result can be extended to obtain a canonical {and insightful} representation for general $\omega$-regular languages.
Such an extension would intuitively need to use a richer type of acceptance condition than co-Büchi acceptance, as co-Büchi acceptance is too limited in expressivity.
The weakest acceptance condition that offers full $\omega$-regularity in this context is \emph{parity acceptance}.
In parity automata, a word is accepted if, and only if, the lowest \emph{color} that occurs infinitely often along a run of the automaton is even. 

We could salvage the polynomial-time canonicization procedure for co-Büchi acceptance while using a parity-type acceptance condition by representing an $\omega$-regular language $L$ by a falling chain of languages $L_0 \supset L_1 \supset L_2 \ldots \supset L_c$ s.t.\
$L_0$ is 
the universal language 
and each language $L_i$ is a co-Büchi language.
{A word is then accepted by the \emph{chain} of languages if and only if the highest $i$ such that the word is in $L_i$ is even.}
As all of these languages are co-B\"uchi languages, 
we can represent each of them by their canonical minimal good-for-games automaton, such that, together, these automata are a canonical representation of $L$.

{The crucial piece that is currently missing in the literature to obtain such a canonical representation of an arbitrary $\omega$-regular language, however, is which word should be 
in which language $L_i$.
Omega-regular languages can be decomposed into such chains in different ways, and for the overall chain to be a canonical representation of the language, we need a fixed way for decomposing the language $L$ into co-Büchi languages $L_i$.
In other words, we are missing a definition of the \emph{natural} color of a word that defines the highest index $i$ such that the word is in $L_i$. 
{This natural color depends} on the overall language to be represented, as this color reflects where in the decomposition of a given language the word resides.
}

For a useful canonicity definition, we need the allocation of words to the individual $L_i$ for a given $\omega$-regular language $L$ to have several properties:
\begin{enumerate}
\item It should be based on the language $L$ alone, and be independent of the syntactic structure of any representation of it (such as some parity automaton that recognizes $L$),
\item it should be easy to compute for a given word and a given representation of $L$ (such as a deterministic parity automaton), and
\item starting from an automaton representation of $L$, the sizes of co-Büchi automata for the {languages $L_i$} should be \emph{small}, ideally not bigger than the size of an automaton for $L$.
\end{enumerate}

In this paper, we provide a definition of a \emph{natural} color of an infinite word for a given $\omega$-regular language that has these properties.
Our definition distills the idea that, in a parity automaton, only the lowest color visited infinitely often along a run matters, into a concept that can be defined on languages alone, without referring to a specific automaton. {We then use this for introducing a canonical representation of arbitrary $\omega$-regular languages as a chain of co-Büchi languages. While the definition of the natural color of a word (for a given language) is the main technical contribution of this paper, its study is motivated by what it can be used for, namely for establishing a canconical representation for $\omega$-regular languages, which is the second contribution of this paper.}

{We show that our particular definition of the natural color of a word (for a given language)} has the property that every deterministic parity automaton can be translated into a form from which the natural color of a word can easily be read off.
This works in two steps: 
We first simplify an automaton by ordering its strongly connected components (using an order that respects reachability) and bending all transitions to language equivalent states in the maximal component they reside in (besides removing unreachable and unproductive states). In a second step, we construct a so-called \emph{streamlined} form of a parity automaton that retains the transition structure.
Both transformations are tractable.

From a streamlined automaton, we can furthermore obtain, again in polynomial time, good-for-games co-Büchi automata for all languages $L_i$. %
They are no larger than the original streamlined parity automaton, and therefore no larger than the deterministic parity automaton we started with.
Moreover, they can subsequently be minimized \cite{DBLP:conf/icalp/RadiK19} to obtain a canonical representation of a given $\omega$-regular language.
{This minimization can also yield an exponential advantage over a representation as deterministic co-B\"uchi automata \cite{DBLP:conf/icalp/KuperbergS15}.}

As a consequence, with our definition, one can obtain, in polynomial time, a canonical representation of the language of a deterministic parity automaton.
While this representation is not a single automaton, deviating from deterministic branching was necessary in order to avoid the NP-hardness of minimizing deterministic parity automata. 
Furthermore, it was shown that good-for-games parity automata are also NP-hard to minimize \cite{DBLP:conf/fsttcs/Schewe20}, so a further generalization had to be made. 
By choosing a sequence of good-for-games co-Büchi automata as this generalization, we avoid introducing a more complex automaton type at the cost of having multiple automata.

{While it is possible  to define other variants of what the natural color of a word (for a given language) could be, our definition has the advantage that it coincides with the color of a word of \emph{some} parity automaton for a given language while permitting a translation from a deterministic parity word automaton to a canonical representation of its language in polynomial time. }

\section{Preliminaries}

Given a set $S$, we denote the set of finite sequences (words) of elements in $S$ as $S^*$ and the set of infinite sequences of elements in $S$ as $S^\omega$. Sets of words are also called \emph{languages} (over some alphabet). We only consider finite alphabets. 
We denote the set of natural numbers including $0$ by $\mathbb{N}$. Given a language $L \subseteq \Sigma^\omega$ and a finite word $w \in \Sigma^*$, we define the suffix language of $L$ over $w$ as $\mathcal{L}^\mathit{suffix}(L,w) = \{w' \in \Sigma^\omega \mid ww' \in L\}$.

We define \emph{parity automata with transition-based acceptance} as tuples $\mathcal{A} = (Q,\Sigma,\delta,Q_0)$, where 
$Q$ is a finite set of states,
$\Sigma$ is a finite alphabet,
$\delta \subseteq Q \times \Sigma \times Q \times \NN$ is a transition relation, and $Q_0 $ is the set of initial states.
We say that a word $w = w_0 w_1 \ldots \in \Sigma^\omega$ induces a run $\pi = q_0 q_1 \ldots$ of the automaton with the corresponding color sequence $\rho = \rho_0 \rho_1 \ldots \in \mathbb{N}^*$ if $q_0 \in Q_0$ and, for every $j \in \NN$, we have $(q_j,w_j,q_{j+1},\rho_j) \in \delta$. 
We say that $w$ is \emph{accepted} by $\mathcal{A}$ if there exists a run $\rho$ for the word on which the lowest color that occurs infinitely often along $\rho$ is even.

We say that $\mathcal{A}$ is deterministic if, for every state $q \in Q$ and $x \in \Sigma$, we have exactly one element of the form $(q,x,q',c) \in \delta$, and $Q_0$ is a singleton set.
We {henceforth} use $q_0$ as tuple element for the initial state for deterministic automata.
In deterministic automata, every word induces a unique run/color sequence combination. This also allows us to define, by slight abuse of notation, for each state $q$ and word $w \in \Sigma^*$, $\delta(q,w)$ to be the unique state reached from $q$ after reading $w$.
We refer to the smallest color that occurs infinitely often in the color sequence corresponding to a run for $w$ as the \emph{color of $w$ in $\mathcal{A}$}.
We also call it the \emph{dominating color} among the ones occurring infinitely often in the color sequence for $w$. 

The set of words accepted by $\mathcal{A}$ is called its language, also denoted by $\mathcal{L}(\mathcal{A})$.
We say that $\mathcal{A}$ is a co-Büchi automaton if only the colors $1$ and $2$ occur along transitions. Transitions with color $1$ and $2$ are also called \emph{rejecting} and  \emph{accepting} transitions, respectively.
The automaton $\mathcal{A}_q$ with $q \in Q$ denotes a variant of $\mathcal{A}$ for which the initial state has been replaced by $q$.
{
We say that two states $q,q' \in Q$ of a deterministic parity automaton (\emph{DPA}) $\mathcal A$ are \emph{equivalent}, denoted by $q \sim_\mathcal{A} q'$, if, and only if, they have the same language $\mathcal L(\mathcal A_q) = \mathcal L(\mathcal A_{q'})$.}

We say that an automaton $\mathcal{A}$ is \emph{good-for-games} if there exists a strategy function \cite{DBLP:conf/icalp/RadiK19} $f : \Sigma^* \rightarrow Q \times \NN$ such that for each word $w = w_0 w_1 \ldots \in \Sigma^\omega$ in the language of $\mathcal{A}$, there exists an accepting run $\pi = q_0 q_1 \ldots \in Q^\omega$ with corresponding color sequence $\rho = \rho_0 \rho_1 \ldots \in \NN^\omega$ for it such that for all $j \in \NN$, we have {$(q_{j+1},\rho_j) = f(w_0 \ldots w_j)$}. Note that such a run is unique for each $w$.

Given an automaton $\mathcal{A} = (Q,\Sigma,\delta,Q_0)$,  we say that some state set $\tilde Q \subseteq Q$ is a \emph{strongly connected component} (SCC) if, for each $q,q' \in \tilde Q$, we have that there exists a sequence of states $q_1, \ldots, q_n$ all in $\tilde Q$ for some $n \in \NN$ such that $q_1 = q$, $q_n = q'$, and for every $1 \leq j < n$, there exist $x \in \Sigma$ and $c \in \NN$ such that $(q_j,x,q_{j+1},c) \in \delta$. We  say that $\tilde Q$ is a \emph{maximal SCC} if $\tilde Q$ cannot be strictly extended by any set of states such that the resulting state set is still an SCC.
Transitions that can only be taken once in a run (starting from any state) are called \emph{transient}, and these connect different SCCs. {We also say that a state is \emph{transient} if it can only be visited once along a run.

Given a co-Büchi automaton $\mathcal{A} = (Q,\Sigma,\delta,Q_0)$, we say that some state set $\tilde Q \subseteq Q$ {with some set of transitions $\tilde \delta \subseteq \delta \cap \tilde Q \times \Sigma \times \tilde Q \times \{2\}$} is an \emph{accepting SCC} if {$(\tilde Q,\tilde \delta)$} is an SCC.
Similarly,  {$(\tilde Q,\tilde \delta)$} is called a \emph{maximal accepting SCC} if {$(\tilde Q,\tilde \delta)$ cannot be extended with further states / accepting transitions} without losing the property that it is an accepting SCC.
}

\section{Towards A Canonical Language Representation}
\label{sec:TowardsCanonicalRepresnetaion}

The core definition we provide in this paper, namely the \emph{natural color of a word}, lifts the idea of parity acceptance from automata to languages.
{Such a natural color will always be defined with respect to a given language, but for the brevity of presentation, we will not always mention this language henceforth.}

Since, at the level of languages, there are no colors of transitions that can be referred to, the definition of the natural color of a word needs to capture the idea of colors in a way that does not employ the colors of transitions. 

In this section, we make some observations on why some languages need a certain number of colors in a parity automaton, and identify ways, in which we can abstract from the automaton representation along the way.
We then distil the observations to define the natural color of a word in the next section.

Niwińkski and Walukiewicz \cite{DBLP:conf/stacs/NiwinskiW98} have given a polynomial-time algorithm to minimize the number of different colors in a parity automaton.
While their algorithm targets parity automata in which states---rather than transitions---are labeled with colors, it is not difficult to extend it to transition-based acceptance. 

The core idea used in their algorithm is that, in order for a parity automaton that recognizes a language to need at least $n$ colors, there needs to exist a so-called \emph{flower} with at least $n$ colors.
Such a flower is defined to satisfy two properties.
It firstly has a sequence of colors $c_1 < c_2 < \ldots < c_n$ such that every two successive colors in the sequence alternate by whether they are even or odd.
Secondly, there exists a center state such that, for each color $c_i$, there exist paths from the center state back to itself such that the dominating color occurring along the transitions along the path is $c_i$.
Following the terminology of Niwińkski and Walukiewicz, we refer to such paths as \emph{flower loops}.
Figure~\ref{fig:flowerVersionB} shows an example parity automaton that contains such a flower over the colors 1, 2, 3, 4, and 5. 

Niwińkski and Walukiewicz have shown that no parity automaton with fewer than $n$ colors can encode a language that admits a flower with $n$ colors  \cite{DBLP:conf/stacs/NiwinskiW98}.
This is because a flower defines a hierarchy over words that are, alternatingly, accepted or rejected by an automaton, and the $n$ different colors are needed to detect on which level in the hierarchy a word is located.

Figure~\ref{fig:wordsInTheHierachy} shows such a hierarchy of words for the parity automaton from Figure~\ref{fig:flowerVersionB}. 
They all have in common that the state $q_c$ in the center of the flower is visited infinitely often in a run of the automaton for the respective word. 

The first word, $(ca)^\omega$, leads to only transitions with color $5$ being taken, so the color of the word is $5$, and the word is rejected. 

The second word is built by injecting $bb$ strings at positions in the word, at which the respective run is in the center state $q_c$. Note that $bb$ causes transitions $q_c \xrightarrow{b} q_2 \xrightarrow{b} q_c$ in the run for the second word, so that the added string causes an excursion in the run that leads back to the same state.
In this way, the run of the second word can be obtained from the run for the first word by adding elements at the positions in which a finite substring is inserted into the word.
Because in the run for the second word, color $4$ is visited infinitely often, this becomes the color of the modified word.

The other words are built according to the same idea: 
By injecting finite words leading from $q_c$ back to $q_c$ (while taking a transition with a lower color) into the word at positions in the word on which a run for the word is at state $q_c$ anyway, we obtain a new word that is accepted by the automaton if, and only if, the old word is rejected.

The most significant color (here, 1) now has the special property that inserting more {loops} %
does not change whether or not the word is accepted, as the most significant color {of the automaton} is already visited along the run.
We can formalize this as follows:
Let $w = w_0 w_1 \ldots$ be a word and $\pi = q_0 q_1 \ldots$ be a run of the automaton over $w$.
We say that a finite word $w'$ is a state-invariant injection at a position $i \in \mathbb{N}$ in the word if $\delta(q_i,w') = q_i$\label{used:extensionofDeltaInParityAutomaton}.
We can characterize the words $w$ that are recognized with the most significant color in a strongly connected component (SCC) of the parity automaton as those for which every word $\tilde w$ that results from an infinite sequence of injections of state-invariant words into $w$ is accepted by the automaton if, and only if, $w$ is accepted by the automaton. 
Note that the restriction to strongly connected components is necessary as other parts of a parity automaton may employ more colors, and hence this property holding for a word whose run ends in some SCC does not exclude that some other automaton part uses more colors, including a lower one.

\begin{figure}
    \centering
    \begin{tikzpicture}
    \node[shape=circle,draw,thick] (qc) at (0,0) {$q_c$};
    \node[shape=circle,draw,thick] (q3) at (5,0) {$q_3$};
    \node[shape=circle,draw,thick] (q2) at (3,2) {$q_2$};
    \node[shape=circle,draw,thick] (q1) at (-3,2) {$q_1$};
    
    \draw[thick,->] (qc) to[bend left=10] node[below left] {a/3} (q1);
    \draw[thick,->] (qc) to[bend left=10] node[above left] {b/4} (q2);
    \draw[thick,->] (qc) to[bend left=10] node[above right,pos=0.7] {c/5} (q3);
    \draw[thick,->] (q1) to[bend left=10] node[below] {b,c/2} (q2);

    \draw[thick,->] (q3) to[bend left=10] node[below left] {a,b,c/5} (qc);

    \draw[thick,->] (q1) to[bend left=10] node[above right] {a/1} (qc);
    
    \draw[thick,->] (q2) to[bend left=10] node[above right] {a,c/3} (q3);
    \draw[thick,->] (q2) to[bend left=10] node[below right,pos=0.2] {b/5} (qc);
    
    \draw[thick,->] ($(qc)+(-1,-0.5)$) -- (qc);
    
    \end{tikzpicture}
    \caption{A flower in a parity automaton. The flower loops are $q_c \xrightarrow{a} q_1 \xrightarrow{a} q_c$ for color 1, $q_c \xrightarrow{a} q_1 \xrightarrow{c} q_2 \xrightarrow{b} q_c$ for color 2, $q_c \xrightarrow{b} q_2 \xrightarrow{a} q_3 \xrightarrow{c} q_c$ for color 3, $q_c \xrightarrow{b} q_2  \xrightarrow{b} q_c$ for color 4, and $q_c \xrightarrow{c} q_3  \xrightarrow{c} q_c$ for color 5.}
    \label{fig:flowerVersionB}
\end{figure}
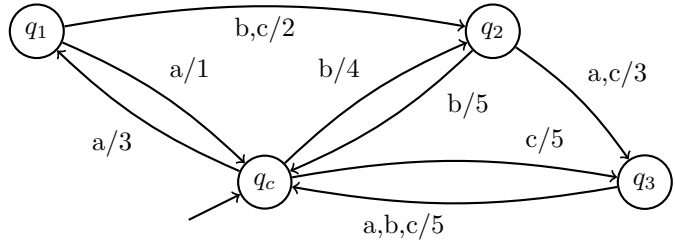

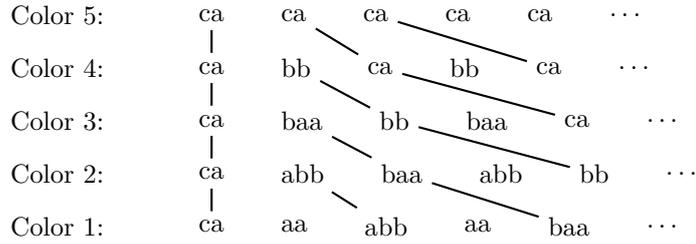
\begin{figure}
    \centering
    \begin{tikzpicture}
    \node[anchor=east] at (0,0) {Color 5:};    
    \node[anchor=east] at (0,-0.7) {Color 4:};
    \node[anchor=east] at (0,-1.4) {Color 3:};
    \node[anchor=east] at (0,-2.1) {Color 2:};
    \node[anchor=east] at (0,-2.8) {Color 1:};
    
    \node[anchor=west] (a1) at (1,0) {ca};
    \node[anchor=west] (a2) at ($(a1.east)+(0.5,0)$) {ca};
    \node[anchor=west] (a3) at ($(a2.east)+(0.5,0)$) {ca};
    \node[anchor=west] (a4) at ($(a3.east)+(0.5,0)$) {ca};
    \node[anchor=west] (a5) at ($(a4.east)+(0.5,0)$) {ca};
    \node[anchor=west] (a6) at ($(a5.east)+(0.5,0)$) {$\ldots$};
    
    \node[anchor=west] (b1) at (1,-0.7) {ca};
    \node[anchor=west] (b2) at ($(b1.east)+(0.5,0)$) {bb};
    \node[anchor=west] (b3) at ($(b2.east)+(0.5,0)$) {ca};
    \node[anchor=west] (b4) at ($(b3.east)+(0.5,0)$) {bb};
    \node[anchor=west] (b5) at ($(b4.east)+(0.5,0)$) {ca};
    \node[anchor=west] (b6) at ($(b5.east)+(0.5,0)$) {$\ldots$};
    
    \node[anchor=west] (c1) at (1,-1.4) {ca};
    \node[anchor=west] (c2) at ($(c1.east)+(0.5,0)$) {baa};
    \node[anchor=west] (c3) at ($(c2.east)+(0.5,0)$) {bb};
    \node[anchor=west] (c4) at ($(c3.east)+(0.5,0)$) {baa};
    \node[anchor=west] (c5) at ($(c4.east)+(0.5,0)$) {ca};
    \node[anchor=west] (c6) at ($(c5.east)+(0.5,0)$) {$\ldots$};
    
    \node[anchor=west] (d1) at (1,-2.1) {ca};
    \node[anchor=west] (d2) at ($(d1.east)+(0.5,0)$) {abb};
    \node[anchor=west] (d3) at ($(d2.east)+(0.5,0)$) {baa};
    \node[anchor=west] (d4) at ($(d3.east)+(0.5,0)$) {abb};
    \node[anchor=west] (d5) at ($(d4.east)+(0.5,0)$) {bb};
    \node[anchor=west] (d6) at ($(d5.east)+(0.5,0)$) {$\ldots$};
    
    \node[anchor=west] (e1) at (1,-2.8) {ca};
    \node[anchor=west] (e2) at ($(e1.east)+(0.5,0)$) {aa};
    \node[anchor=west] (e3) at ($(e2.east)+(0.5,0)$) {abb};
    \node[anchor=west] (e4) at ($(e3.east)+(0.5,0)$) {aa};
    \node[anchor=west] (e5) at ($(e4.east)+(0.5,0)$) {baa};
    \node[anchor=west] (e6) at ($(e5.east)+(0.5,0)$) {$\ldots$};
    
    \draw[thick] (a1) -- (b1);
    \draw[thick] (a2) -- (b3);
    \draw[thick] (a3) -- (b5);
    
    \draw[thick] (b1) -- (c1);
    \draw[thick] (b2) -- (c3);
    \draw[thick] (b3) -- (c5);

    \draw[thick] (c1) -- (d1);
    \draw[thick] (c2) -- (d3);
    \draw[thick] (c3) -- (d5);
    
    \draw[thick] (d1) -- (e1);
    \draw[thick] (d2) -- (e3);
    \draw[thick] (d3) -- (e5);

    \end{tikzpicture}
    \caption{{An example hierarchy of words for the parity automaton from Figure~\ref{fig:flowerVersionB}, used in an example in Section~\ref{sec:TowardsCanonicalRepresnetaion}. }
    {The words from the lower colors are obtained from those of higher colors by inserting language invariant words, which are flower loops in the given example.
    The lines show where the flower loops from the words with higher color are in the words with lower color.}}
    \label{fig:wordsInTheHierachy}
\end{figure}

\subsection{The case of the most significant color}

Let us now distill the definition of acceptance with the most significant color to the language case. {In the resulting definition,} the restriction to a single SCC will also be lifted.

The hierarchy of words from Figure~\ref{fig:wordsInTheHierachy} refers to the states of a given parity automaton: Finite words can only be inserted at places, in which the added finite words loop from the state in which the run of the automaton is at the insertion place, back to the same state.
This idea can be lifted by replacing the notion of state by \emph{suffix language invariance}. 
We say that inserting a finite word $u \in \Sigma^*$ at position $i \in \NN$ in a word $w = w_0 w_1 \ldots \in \Sigma^\omega$ is a suffix language invariant injection into $w$ for a language $L \subseteq \Sigma^\omega$ if 
$\mathcal{L}^\mathit{suffix}(L,w \ldots w_i) = \mathcal{L}^\mathit{suffix}(L,w \ldots w_i u)$.

In this definition, the suffix languages take the role of the flower center states, and they have the nice property that they are independent of an automaton representation of the language. 

Injecting any finite number of loops from a state back to itself to a run of a parity automaton does not change the color with which the respective word is accepted. 
It makes sense to expect for the definition of the natural color of a word that similarly, any finite number of suffix language invariant injections should not change the color of a word.

With this in mind, we can try to liberate the definition regarding which words should be recognized with the most significant color from any reference to a particular automaton:
they are those words for which an infinite number of suffix language invariant word injections does not change whether or not a word is accepted.
It is, however, necessary to carefully define where exactly in a word these injections can be made.

To see this, consider the case of the language `there are infinitely often two $a$s in a row'.
It can be recognized by a deterministic \emph{Büchi} automaton, i.e., a deterministic parity automaton with colors $0$ and $1$.
Since $0$ is the most significant color, all words in the language need to have this natural color.
This language has only a single suffix language, namely itself.
If we would require for a word to be of natural color $0$ that \emph{all} infinite sequences of suffix language invariant word injections result in an accepted word, then \emph{no} word would be accepted with this natural color: This is because this would include injecting a $b$ as every second character in a word.
Consequently, not all injection sequences need to be tolerated for a word to be of natural color $0$.
But it also does not suffice if only \emph{some} injection is tolerated:
In an extreme case, that includes injecting the empty word everywhere, which never affects acceptance.

A compromise between these two extremes is to use different quantifiers for the points of injection and the words being injected.
We declare those words to have a lowest natural color, for which 
\emph{there exists} an infinite sequence of points, at which suffix invariant words can be injected, such that, \emph{for all} insertions of sequences of suffix invariant words at these points, the resulting word is accepted if, and only if, the original word was.
While this solves the problem from the short example in the paragraph above, it is not trivially clear whether or not other problems remain when using this definition. 
The correctness of the construction from the next section, however, shows that this is precisely the definition we need.

\subsection{Generalizing to all colors}
To generalize the idea of the natural color of a word from the most significant color to the general case, we can follow an inductive argument and---in a sense---peel the language off, layer by layer.
We look at the colors $c \in \NN$ in ascending order and define, for each color $c+1$, which words are natural for this color, under the assumption that we have already defined this for colors up to $c$.
To do so, we can marry an inductive definition of what constitutes the color of a word in a parity automaton with the automaton-agnostic definition of the natural color of a word for the most significant color from above. %

We start with revisiting an inductive definition of the color of a word in a parity automaton.
We can characterize the words accepted with color $0$ to be those along whose runs transitions with color $0$ are taken infinitely often.
For colors $c>0$, we can define that a word is accepted with color $c$ if transitions with color $c$ are taken infinitely often for its run, \emph{and} the word is not accepted with a color smaller than $c$.
The nice property of this rather indirect definition is that it only refers to colors already defined and a single additional color. 

In an orthogonal composition of this idea for an inductive definition with the central idea from the previous subsection, we can allocate the color $c$ as the natural color of a word $w$ if there exists an infinite set of indices such that, for every sequence of suffix invariant strings inserted at these indices, we have that the resulting word $\tilde w$
\begin{itemize}
    \item \emph{either} has a natural color smaller than $c$,
    \item \emph{or} it does not, and is in the language if, and only if, $w$ is in the language.
\end{itemize}
Thus, we only require that inserting the words makes no difference regarding acceptance where the resulting words are not of a smaller natural color.

This definition has the nice property that, by induction, for every color, the natural colors of words are uniquely defined purely by the language of the word, without reference to an automaton representation.
The concrete definition {given in the next section}, however, makes the words for colors $c+i$ (for $i \geq 0$) contained in the words for a color $c$, so we define the natural color to be the \emph{minimal} color to which this definition is applicable. %

While it is, again, not trivially clear that this idea captures all necessary aspects of the natural color of a word, our results in the following section show that this is the case.

\section{The Natural Color of a Word}
\label{sec:wordColor}

{In this section, we define our notion of the natural color of a word (w.r.t.~a given language) based on the observations from the previous section. Based on this definition, we show how a sequence of co-Büchi automata can be obtained from the parity automaton that, after their minimization, is a canonical representation of the language. This construction has two steps, namely first \emph{streamlining} the determinstic parity automaton, and then extracting the co-Büchi automata from it.
Using this sequence of automata as representations for the natural colors of words (w.r.t.~the represented language), we finally show that the natural colors of words can %
be read off from the streamlined deterministic parity automaton directly, which %
shows that the natural colors of words can be read off from \emph{any} deterministic parity automaton after preprocessing it.
}

We start with defining the colors in which a word is `at home' {for a given language}. {The natural color of a word is then the minimal color at which a word is at home.}
First, we repeat some concepts from the previous section to make Definition~\ref{def:beingAtHome} below self-contained.
We say that some finite word $u \in \Sigma^*$ is \emph{suffix language invariant} for a
language $L$ after a finite word $w \in \Sigma^*$ if we have $\mathcal{L}^\mathit{suffix}(L,w) = \mathcal{L}^\mathit{suffix}(L,w u)$.

Let a word $w = w_0 w_1 \ldots \in \Sigma^\omega$  over an alphabet $\Sigma$ be given. We say that some word $w' \in \Sigma^\omega$ is the result of a suffix language invariant injection of a sequence of words $u_0,u_1,\ldots$ at positions $J= \{i_0,i_1,\ldots\}$ with $i_0<i_1<\ldots$ in $w$ if and only if $w' = w_0 w_1 \ldots w_{i_0} \, u_0 \,  w_{i_0+1} \ldots w_{i_1} \, u_1 \, w_{i_1+1} \ldots w_{i_2} \, u_2 \ldots$ and for each $j \in \NN$, we have that $u_j$ is suffix language invariant for $w_0 \ldots w_{i_j}$.

\begin{definition}
\label{def:beingAtHome}
For every even/odd $i$, we say that a %
word
$w = w_0 w_1 \ldots \in \Sigma^\omega$ is \emph{at home in color $i \in \mathbb{N}$} for some language $L$ if there exists an infinite subset $J \subseteq \mathbb{N}$
such that, for every possible sequence of finite words $u_0, u_1, \ldots$, if 
a word $w'$ is the result of a suffix language invariant injection of $u_0, u_1, \ldots$ at positions $J$, then
we have that
\begin{itemize}
\item $w'$ is already at home in a color strictly smaller than $i$, or
\item both $w$ and $w'$ are in $L$ and $i$ is even, or both $w$ and $w'$ are not in $L$ and $i$ is odd.
\end{itemize}
\end{definition}

Note that the first case in the preceding definition cannot apply for color $i=0$ (as there is no smaller color), so only the second case is of relevance for $i=0$.

We call the minimal natural number that a word $w$ is at home in the \emph{natural color of $w$} (for a given language $L$).

As a small remark, 
this definition could also be given in a variant where the lowest color a word can be at home in is $1$, which swaps the order in which we check for `$i$ is even' and `$i$ is odd'.
This affects the number of different natural colors that the words can have (for a given language) by at most $1$.
All results given henceforth also work with odd and even colors swapped. %

The color of a run of a DPA is not necessarily the natural color of the word defined above (for the language of the DPA).
We will, however, show how a deterministic parity automaton can be used to define a family of good-for-games co-Büchi automata that can determine the natural color of a word.

\subsection{Streamlining DPAs}

{The first concept to use is the concept of a  \emph{structured parity automaton}:} 
We call a DPA \emph{structured} if
(1) all states of $\mathcal A$ are reachable and
(2) if two states $q$ and $q'$ are equivalent, then they are in the same maximal SCC.

Turning a given DPA into an equivalent structured DPA is cheap.%
\footnote{The procedure can for instance be found in \cite{DBLP:conf/fsttcs/Schewe10}; while it is only described for B\"uchi and co-B\"uchi automata there, it can also be applied to parity automata, as done in \cite{DBLP:conf/csl/LodingT20}.} {First, non-reachable states are removed. Then,  an arbitrary minimal preorder that preserves reachability among the maximal SCCs of the DPA is defined.}
Two states are equivalent according to this preorder if, and only if, they are reachable from each other (i.e., they are in the same maximal SCC).
Apart from this, the preorder follows the reachability relation in that, if a state $q$ is reachable from a state $q'$, then $q \geq q'$.

A transition to a state $q$ with a language equivalent state $q'$ s.t.\ $q' > q$ is then re-directed to some language equivalent state $q''$ that is maximal according to this preorder.

As the position in this preorder can only grow along every run, there are only finitely many re-directions taken on every run.
The language of the automaton is not changed by this operation. 

{Finally, the states that become unreachable by rerouting the transitions are removed again to make the resulting automaton structured.} 

\begin{definition}
\label{def:streamlinedDPA}
Let a structured DPA $\mathcal A = (Q,\Sigma,\delta,q_0)$ be given.
We define its \emph{streamlined} version to be the outcome of the following \emph{streamlining process}, in which the structure of $\mathcal A$ is not changed, but a new color is assigned to each transition.

We first produce a copy of the structure of $\mathcal A$, creating a fresh \emph{coloring graph} $\mathcal G= (Q,\Sigma,\delta)$ (which is called `coloring graph' because it is used for determining the new colors in $\mathcal A$; $\mathcal G$ itself does not have colors).
We will then successively remove states and transitions from $\mathcal G$ (not from $\mathcal A$), while assigning the transitions {new colors} in $\mathcal A$.

Starting with $i=0$, we do the following until $\mathcal G$ is empty.
\begin{enumerate}
    \item We first partition $\mathcal G$ into maximal SCCs.
    \item We then identify all transient transitions in $\mathcal G$.
    We change their colors in $\mathcal A$ to $i$,
    and then remove the transient states and transitions from $\mathcal G$.
    \item 
    {We check if in any SCC of $\mathcal{G}$, the least color of the transitions (in $\mathcal{A}$) between states in the SCC has the same parity as $i$.
    \begin{itemize}
    \item 
    If such transitions are found, we first change their colors in $\mathcal A$ to $i$, remove them from $\mathcal G$, and then return to (1) without incrementing $i$.
    \item If there are no such transitions, we increment $i$ and go back to (1).
    \end{itemize}}
\end{enumerate}
These steps are repeated until $\mathcal G$ is empty.
\end{definition}
{The purpose of the construction is to iteratively \emph{lower} the colors of transitions in $\mathcal{A}$ towards the most significant one whenever that is possible without changing the language of the automaton. The \emph{structure} of the automaton is not altered.
The algorithm identifies transitions that can be recolored to a lower color $i$ because they can only be taken finitely often along runs that do not have a dominating color lower than or equal to $i$ anyway.
Also, the algorithm identifies transitions that can be safely recolored to $i$ because, while changing the color to $i$ changes the dominating color of some runs, it never changes the parity of the dominating color.}

The streamlining construction above is a variant of an algorithm by Carton and Maceiras \cite{DBLP:journals/ita/CartonM99} to relabel the colors of states in a parity automaton with state-based acceptance. The construction has been adapted to the case of transition-based acceptance and gives rise to a more concise correctness argument stated next.
{It creates an automaton, where all language equivalent states are in the same SCC.}

\begin{lemma}
The streamlining process terminates and does not change the language of $\mathcal A$. 
\end{lemma}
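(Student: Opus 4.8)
The plan is to prove the two assertions—termination and language preservation—separately, both resting on a single invariant about the colors surviving in the coloring graph $\mathcal{G}$. The invariant is: at the beginning of every iteration (step (1)) with counter value $i$, every transition still present in $\mathcal{G}$ carries an original color (its color in $\mathcal{A}$ before streamlining) of at least $i$. This holds for $i=0$, and is maintained because a transition is recolored only at the moment it leaves $\mathcal{G}$, so surviving transitions keep their original colors. The only nontrivial case is the increment of $i$ in step (3): after step (2) has deleted all transient transitions, every surviving transition lies inside an SCC of $\mathcal{G}$ that contains a cycle; if some such transition had color exactly $i$, then by the invariant its SCC would have least color $i$, which has parity $i$, and step (3) would have taken its first branch rather than incrementing. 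Hence after an increment all surviving colors are at least $i+1$.

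Termination then follows from a measure argument. Each iteration either removes at least one transition (step (2) whenever transient transitions exist, or the first branch of step (3), which always deletes at least one least-color transition) or else increments $i$; and by the invariant $i$ can only be incremented while $\mathcal{G}$ is nonempty up to the largest color $C$ occurring in $\mathcal{A}$, of which there are finitely many. The lexicographic measure on the pair consisting of the number of transitions remaining in $\mathcal{G}$ and $C-i$ strictly decreases in every iteration, so the loop runs only finitely often; since its sole exit condition is $\mathcal{G}$ becoming empty, $\mathcal{G}$ is eventually emptied and every transition receives a new color.

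For language preservation, the key point is that the construction alters only colors, not the states or transitions, so each word induces the very same run before and after streamlining; it therefore suffices to show that acceptance is unchanged on every run. I would fix a run and let $S$ denote its limit set, that is, the strongly connected subgraph formed by the transitions taken infinitely often together with their endpoints. Acceptance is the parity of the least original color in $S$, call it $m$, and after streamlining the parity of the least new color in $S$; the goal reduces to showing these parities agree. The core claim is: if $i^{*}$ is the level at which the first transition of $S$ is removed from $\mathcal{G}$, then $i^{*}$ is the least new color occurring in $S$ and $i^{*}\equiv m \pmod 2$. To see this, note that up to the iteration of the first such removal all transitions of $S$ are still present, so $S$ is an intact strongly connected subgraph whose transitions all lie inside one SCC of $\mathcal{G}$—hence are non-transient and untouched by step (2). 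The first removal of an $S$-transition thus occurs in the first branch of step (3), which deletes exactly the least-color transitions of some qualifying SCC $S'\supseteq S$ whose least color has parity $i^{*}$. The removed $S$-transition has color equal to the least color of $S'$, which is at most $m$ (since $S\subseteq S'$) and at least $m$ (since the transition lies in $S$), hence equals $m$; therefore $m$ is the least color of $S'$ and $m\equiv i^{*}\pmod 2$. As new colors are assigned in nondecreasing order of $i$, no transition of $S$ receives a new color below $i^{*}$ while this one receives exactly $i^{*}$, so the least new color in $S$ is $i^{*}$, of parity $m$, and acceptance of the run is preserved.

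The main obstacle is this core structural claim—in particular, forcing the first $S$-transition to be recolored to carry the original minimum color $m$ of the limit set and to be recolored to a value of the same parity. The two delicate subpoints are that $S$ remains intact (so its transitions are non-transient and escape step (2)) up to the moment of its first recoloring, and that the recoloring event is an instance of the first branch of step (3) applied to the current SCC containing $S$. Once these are secured, the reduction of language equality to per-run acceptance, the invariant, and the termination measure are all routine.
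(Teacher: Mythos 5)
Your proof is correct, and while your invariant and termination arguments essentially coincide with the paper's (the paper states the invariant as: before $i$ is incremented, all transitions with color $\leq i$ have been removed from $\mathcal G$, and terminates by noting each transition is recolored at most once and $i$ cannot exceed the number of colors), your language-preservation argument takes a genuinely different route. The paper argues \emph{stepwise}: it fixes an arbitrary run and shows that each individual recoloring event preserves the run's acceptance status --- for step (2) recolorings, a transition that is transient in $\mathcal G$ yet taken infinitely often forces infinitely many already-removed (hence $\leq i$-colored) transitions along the run; for step (3) recolorings, a three-way case split on whether the recolored transition occurs finitely often, lies in the SCC in which the run is eventually trapped, or lies outside it. This yields the stronger byproduct that every \emph{intermediate} coloring produced by the algorithm is language-equivalent to the original. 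You instead compare only the initial and final colorings on each run: fixing the run's limit set $S$ with least original color $m$, you locate the first moment an $S$-transition leaves $\mathcal G$, observe that (while $S$ is intact) its transitions lie on cycles and so can only leave via the first branch of step (3), and conclude that the removed transition's color equals both the least color of the ambient SCC and $m$, forcing the counter value $i^*$ at that moment to have the parity of $m$ and, by monotonicity of the counter, to be the least new color on $S$. Your route avoids the paper's case analysis and gives a sharper characterization --- the streamlined dominating color of a run is exactly the counter value at which its limit set is first broken --- at the cost of not establishing equivalence of the intermediate automata, which the lemma does not require anyway.
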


\begin{proof}
We first observe that, by a simple inductive argument, before $i$ is incremented, all transitions with color $\leq i$ have been removed from $\mathcal G$.

As induction basis, for $i=0$, every transition with color $0$ that remains after step (2) is the minimal color in {their} maximal SCC, and therefore removed in step (3).

For the induction step, after incrementing the counter to $i+1$, all transitions with color $\leq i$ have been removed by the induction hypothesis, such that all remaining transitions with color $i+1$ must either be transient (and removed in step (2)), or of minimal color in a maximal SCC (and are then removed in step 3).

{The algorithm always terminates because the color of a transition is only ever changed once (as the transition is removed from $\mathcal{G}$ whenever their color in $\mathcal{A}$ is changed){. When no color is (re)assigned in $\mathcal A$ and removed from $\mathcal G$ in an iteration}, $i$ is increased. {Finally, o}nce $i$ exceeds the number of colors of $\mathcal{A}$, the graph $\mathcal{G}$ {must be} empty, leading to termination}.

{The induction argument above also} establishes that the color of each transition can only be \emph{reduced} by this construction, but to no color lower than any (new) color of the edges that have previously been removed from $\mathcal G$.

This observation is the basis for establishing language equivalence.
We establish this language equivalence step-wise, considering only {the effect that} the changes of colors initiated by step (2) or step (3) in one iteration have {on the dominating color of \emph{some} (arbitrary) run of $\mathcal{A}$}.

For colors changed in step (2) of the algorithm, we observe that, if the respective transition $t$ occurs infinitely often, some other transition that has previously been removed from $\mathcal{G}$ must occur infinitely often, too.
The color of any of these removed transitions is $\leq i$, such that changing the color of $t$ to $i$ does not change the dominating color of the run.
If $t$ {however occurs only finitely often along the run in $\mathcal{A}$}, it cannot change the dominating color of the run either.

For a transition $t$ whose color is changed in step (3), we {distinguish three cases.}
First, if $t$ occurs only finitely often {along the run under %
{consideration}}, the color change does not influence the dominating color of the run.
Second, if {the run eventually remains} in the same maximal SCC (in $\mathcal{G}$) as $t$ was and $t$ occurs infinitely often, then the previous color of $t$ was the dominating color of the run, because $t$'s color was minimal among the colors of the SCC.
But then the new dominating color is $i$, which has the same parity as (and is no greater than) the previous color of $t$.
Finally, {if the run does not eventually get stuck in the maximal SCC of $t$ in $\mathcal{G}$}, but $t$ occurs infinitely often, then there are infinitely many transitions passed that have been re-colored before $t$, and that therefore have a color $\leq i$, such that the re-coloring of $t$ does not change the dominating color of the run.
\end{proof}

\subsection{From Streamlined DPAs to Color-Recognising GCAs}

We will not relate the colors of the runs from streamlined automata directly to the natural color of a word, but use them to define good-for-games co-Büchi automata {(\emph{GCAs})} that do so.
These GCAs are quite easy to obtain from $\mathcal A$.

\begin{definition}
\label{def:individualAutomata}
Let $\mathcal A=(Q,\Sigma,\delta,q_0)$
be a streamlined automaton and $i$ be a color that occurs in $\mathcal A$, then $\mathcal A_i = (Q,\Sigma,\delta_i,q_0)$ is the automaton such that, for all $(q,x,q',c) \in \delta$,
\begin{itemize}
\item $(q,x,q',2)\in \delta_i$ if $c \geq i$,
\item $(q,x,q',1)\in \delta_i$ if $c < i$, and
\item $(q,x,q'',1)\in \delta_i$
{for all $q'' \in Q$ with $q' \sim_{\mathcal A}q''$ such that},
for all colors $c'$, $(q,x,q'',c') \notin \delta$.
\end{itemize}
\end{definition}

{The co-Büchi automata are defined such that, for all colors $i$, $\mathcal{A}_i$ accepts those words, for which the run in $\mathcal{A}$ has a dominating color of at most $i$ (using transitions of the first two types in the list above only). However, $\mathcal{A}_i$ accepts some additional words: The transitions added by the third item in the list above allow a run to ``jump'' to any state that is language-equivalent in $\mathcal{A}$ (if the transition is not already part of $\mathcal{A}_i$ by the first two items). In accepting runs, this can only happen finitely often, though.

We will show in the remainder of this subsection that, for all $i$, $\mathcal{A}_i$ is a good-for-games co-Büchi automaton that accepts exactly the words that are at home in color $i$ (w.r.t.~the language of $\mathcal{A}$).}

The first observation {that we will use to achieve this goal} is that the languages of these co-Büchi automata are obviously shrinking with growing index, simply because the transitions are the same, but some of the accepting transitions become rejecting transitions (i.e., their color changes from $2$ to $1$).

\begin{observation}
\label{obs:inclusion}
For $i \leq j$, $\mathcal L(\mathcal A_i) \supseteq \mathcal L(\mathcal A_j)$ holds. Also, $\mathcal{A}_0$ accepts all words as it only has accepting transitions (since no color is smaller than $0$) while including outgoing transitions for each state/character pair (as $\mathcal{A}$ is deterministic).
\end{observation}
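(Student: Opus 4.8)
The plan is to exploit the fact that all the automata $\mathcal{A}_i$ share a single underlying transition graph and differ only in the $\{1,2\}$-labelling of their non-jump transitions, which moreover changes monotonically in $i$. First I would record this structural observation precisely: for a fixed streamlined automaton $\mathcal{A}$, the set of state/letter/successor triples $(q,x,q')$ carrying a transition is identical in every $\delta_i$. Indeed, the first two bullets of Definition~\ref{def:individualAutomata} keep every original triple of $\delta$ (only reassigning its color to $1$ or $2$ depending on whether the original color is below $i$), and the third bullet adds jump triples $(q,x,q'')$ whose defining condition (that $q'' \sim_{\mathcal{A}} q'$ and that no original transition $(q,x,q'',c')$ exists) does not mention $i$ at all. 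Hence the jump transitions are the same across all $\mathcal{A}_i$ and are always rejecting.

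For the inclusion $\mathcal{L}(\mathcal{A}_i) \supseteq \mathcal{L}(\mathcal{A}_j)$ with $i \leq j$, I would take a word $w \in \mathcal{L}(\mathcal{A}_j)$ together with an accepting run $\pi$ for it in $\mathcal{A}_j$, i.e.\ a run whose transitions are eventually all accepting (color $2$). By the structural observation, the same state sequence $\pi$ is a legal run of $\mathcal{A}_i$ on $w$, since every triple it uses is present in $\delta_i$. The key step is then the monotonicity of acceptance: a non-jump transition is accepting in $\mathcal{A}_j$ exactly when its original color $c$ satisfies $c \geq j$, and since $i \leq j \leq c$ this forces $c \geq i$, so the same transition is accepting in $\mathcal{A}_i$. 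Consequently the all-accepting tail of $\pi$ in $\mathcal{A}_j$ remains all-accepting in $\mathcal{A}_i$, so $\pi$ witnesses $w \in \mathcal{L}(\mathcal{A}_i)$.

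For the second claim, about $\mathcal{A}_0$, I would argue that when $i = 0$ the condition $c \geq i$ holds for every original transition, because colors are non-negative; hence every original triple of $\delta$ becomes an accepting transition in $\delta_0$, and there are no rejecting non-jump transitions. As $\mathcal{A}$ is deterministic, each pair $(q,x)$ has exactly one original successor $\delta(q,x)$, giving an accepting transition out of every state on every letter. Thus, for any word, following these accepting transitions (and never using a jump transition) yields a run that takes only accepting transitions, which is accepting; so $\mathcal{A}_0$ accepts every word.

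I do not expect a genuine obstacle here, as both claims follow directly from the definitions; the only point requiring care is the bookkeeping around the nondeterministic jump transitions. One has to note explicitly that they are identical in all $\delta_i$ and are permanently rejecting, so that the following two facts hold: a run of $\mathcal{A}_j$ is literally a run of $\mathcal{A}_i$, and their presence cannot spoil the monotonicity argument, since an accepting run may simply avoid them in its tail. Everything else is an immediate consequence of $c \geq j$ implying $c \geq i$.
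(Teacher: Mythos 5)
Your proof is correct and takes essentially the same route as the paper, which justifies the observation with the single remark that all $\mathcal{A}_i$ share the same transitions while accepting transitions only turn rejecting as $i$ grows, so accepting runs of $\mathcal{A}_j$ remain accepting in $\mathcal{A}_i$ for $i \leq j$. Your explicit bookkeeping for the jump transitions (identical in every $\delta_i$ and always rejecting, hence avoidable in the accepting tail) is a slight sharpening of the paper's parenthetical claim that $\mathcal{A}_0$ ``only has accepting transitions,'' which glosses over those rejecting jump transitions.
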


\begin{theorem}\label{theo:AiGFG}
For all $i$, $\mathcal A_i$ is good-for-games.
\end{theorem}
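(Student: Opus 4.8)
The plan is to establish good-for-gameness by exhibiting an explicit resolver (a strategy function $f$ as in the definition of good-for-gameness) and proving that it produces an accepting run for every $w \in \mathcal L(\mathcal A_i)$. The starting point is a characterisation of $\mathcal L(\mathcal A_i)$ in terms of the deterministic structure of $\mathcal A$. First I would record the invariant that, because every jump transition (the third item of Definition~\ref{def:individualAutomata}) leads to a state $\sim_{\mathcal A}$-equivalent to the deterministic $\mathcal A$-successor, \emph{every} state a run of $\mathcal A_i$ can occupy after reading a prefix $u$ is language-equivalent to $\delta(q_0,u)$. Writing $\mathrm{Safe}_i(q)$ for the set of words whose deterministic $\mathcal A$-run from $q$ only ever uses colors $\geq i$ (equivalently, words admitting a run from $q$ in $\mathcal A_i$ along accepting transitions only), this invariant gives: $w \in \mathcal L(\mathcal A_i)$ if, and only if, there is a position $m$ and a state $r \sim_{\mathcal A} \delta(q_0, w_0 \ldots w_{m-1})$ with $w_m w_{m+1} \ldots \in \mathrm{Safe}_i(r)$. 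Indeed, an accepting run has a last rejecting (color $1$) transition, after which it is a deterministic color-$\geq i$ tail from the state $r$ reached then, and $r \sim_{\mathcal A} \delta(q_0, w_0\ldots w_{m-1})$ by the invariant; conversely one such last jump plus a safe tail accepts.

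Next I would define $f$ by a breakpoint (subset) construction carried out within a single equivalence class. The resolver maintains an \emph{active set} $A \subseteq [\delta(q_0,u)]$ of endpoints of still-safe candidate runs, together with a \emph{token} $t \in A$ marking the state the run actually occupies. On reading $x$ it updates $A$ to the safe successors $A' = \{\delta(a,x) \mid a \in A,\ \text{the color of }(a,x)\text{ in }\mathcal A\text{ is}\geq i\}$. If the token's own transition is accepting, the token follows it (color $2$) into $A'$; if the token's transition is rejecting but $A' \neq \emptyset$, the token jumps (color $1$) to some member of $A'$; and if $A' = \emptyset$ (a \emph{breakpoint}), the resolver \emph{resets}, setting the new active set to the whole class $[\delta(q_0, ux)]$ and jumping the token (color $1$) to its deterministic successor. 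Every transition so chosen is legal in $\delta_i$ precisely because the active set always stays inside one $\sim_{\mathcal A}$-class, so the jump targets are language-equivalent as required. Note that $f$ depends only on the read prefix, and is in fact finite-memory.

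For correctness I would fix $w \in \mathcal L(\mathcal A_i)$ and the witnessing $m,r$ from the characterisation and argue finiteness of the color-$1$ transitions in two independent counts. No breakpoint can occur after $m$: at the reset at or before position $m$ (or since $A$ was already non-empty) the class of $r$ enters $A$, and since $w_m w_{m+1}\ldots \in \mathrm{Safe}_i(r)$ the descendant of $r$ remains safe forever, so $A$ never empties again; hence only finitely many resets happen. Between two consecutive breakpoints $A$ only shrinks, and each token jump eliminates one previously-alive candidate run that the token never revisits, so at most $|Q|$ token jumps occur per epoch; with finitely many epochs the run takes finitely many color-$1$ transitions and is therefore accepting. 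This makes $f$ a resolver witnessing that $\mathcal A_i$ is good-for-games.

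The main obstacle is exactly this finiteness argument: a naive resolver that follows a single token and jumps to some equivalent state whenever it is about to take a rejecting transition could jump forever, so the delicate point is certifying that a genuinely surviving candidate exists and keeping the token from resetting the whole class needlessly. The breakpoint set-tracking is what resolves this, and the two counts must be bounded by different measures—resets by the observation that the surviving candidate prevents all later breakpoints, and the intra-epoch token jumps by the monotone shrinking of the active set. I expect the bookkeeping of these two measures, rather than any single clever step, to carry the weight of the proof.
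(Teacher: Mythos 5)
Your construction is genuinely different from the paper's. The paper's resolver keeps a single candidate: it follows accepting transitions whenever possible, and when forced to take a rejecting transition it jumps to the end state of the \emph{smallest} run prefix (in a fixed enumeration of all finite run prefixes by length) that still admits an all-accepting continuation; correctness is an enumeration argument --- the prefix $q_0\ldots q_j$ of the witnessing accepting run sits at some fixed position $p$ in the enumeration, failed prefixes stay failed, so after at most $p$ rejecting transitions the resolver locks onto an all-accepting tail. You instead run a breakpoint (Miyano--Hayashi-style) subset construction inside the current $\sim_{\mathcal A}$-class, with a token realising the actual run, and you bound the rejecting transitions by two separate measures (the number of resets, and the shrinkage of the active set within an epoch). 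Your route additionally makes the resolver explicitly finite-memory, and it leans on the congruence property of $\sim_{\mathcal A}$ (every state a run of $\mathcal A_i$ can occupy after $u$ is equivalent to $\delta(q_0,u)$), which the paper never needs because it only jumps to endpoints of actual runs of $\mathcal A_i$.

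One step is wrong as stated, although the conclusion it serves is recoverable inside your own framework. You claim that \emph{no} breakpoint can occur after position $m$, because ``at the reset at or before position $m$ the class of $r$ enters $A$''. That is not justified: if the last reset before $m$ happened at some $k<m$, then at position $m$ the active set consists of the \emph{safe descendants} of the class $[\delta(q_0,w_0\ldots w_{k-1})]$, and nothing forces $r$ --- which is only guaranteed to have a safe future, not a safe past from position $k$ --- to be among them. So $A$ can perfectly well empty at some position after $m$. The correct statement is that at most \emph{one} breakpoint can occur after $m$: the first post-$m$ reset, say at position $k>m$, sets $A$ to the full class $[\delta(q_0,w_0\ldots w_{k-1})]$, which contains $\delta(r,w_m\ldots w_{k-1})$ by congruence; this state has an all-colors-$\geq i$ deterministic future, so its descendants keep $A$ nonempty forever afterwards. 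With this one-line repair the number of resets is finite, and the rest of your argument (at most $|Q|$ token jumps per epoch, since the token's own candidate dies at each jump and the active set never grows within an epoch) goes through, so the constructed run takes only finitely many color-$1$ transitions and $\mathcal A_i$ is good-for-games.
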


The proof is a pretty standard proof for co-Büchi automata: it merely says `follow the run that has longest been through accepting transitions'.

\begin{proof}
If the automaton $\mathcal A_i$ has no accepting run for a word $w = w_0 w_1 w_2 \ldots$, there is nothing to show.

We now assume that $w$ has an accepting run $\pi = q_0 q_1 q_2 \ldots$, where $j$ is the first position in $\pi$ such that, for all $k \geq j$, $(q_k,w_k,q_{k+1},2)$ are accepting transitions. %

We argue that $\mathcal A_i$ can accept $w$ with the strategy to
\begin{enumerate}
    \item follow accepting transitions where possible; note that there is at most one outgoing accepting transition for every state / input letter pair, so this selection is deterministic, and
    \item if no such transition is available when reading $w_k$, move to a state $q'_{k+1}$ such that there exists a run prefix $\pi' = q_0 q_1' q_2' \ldots q_{k+1}'$ for $w_0 \ldots w_k$ with some $l < k+1$ such that the transitions taken from $q'_l$ onwards are all accepting. In particular, state $q'_{k+1}$ is chosen for some \emph{lowest possible} value of $l$ among such run prefixes (the way to choose ex aequo does not matter). 
\end{enumerate}

{
To see why this strategy yields an accepting run, consider a total order over all possible finite run prefixes. The prefixes are ordered by their size (starting from the smallest one), but otherwise the total order is arbitrary. 

{Whenever the strategy can continue along an accepting transition, it does so.
When it has to take a rejecting transition after having read a finite prefix $w'$ of $w$, it chooses a smallest run prefix $\rho'$, such that $w'$ has a unique finite run $\rho'\rho''$, where $\rho''$ contains only accepting transitions.
$\rho'\rho''$ ends in some state $q$, and our strategy is to move to this state $q$.}

The prefix $q_0 \ldots q_{{j}}$ %
is somewhere in this order, {say at position $p$. Now, if there are at least $p$ rejecting transitions taken when following the strategy, then $q_0 \ldots q_{{j}}$ will eventually be tried. From this point onward}, no rejecting transitions are taken anymore in the run for $w$. If fewer than $p$ rejecting transitions are taken when following the strategy, the resulting run is accepting as well.
}

Thus, we always have a strategy that only relies on the past, and $\mathcal A_i$ is good-for-games.
\end{proof}

\begin{theorem}
$\mathcal A_i$ accepts a word $w$ if, and only if, its natural color for the language $\mathcal L(\mathcal A)$ is at least $i$ {(i.e., $w$ is at home in color $i$)}.
\end{theorem}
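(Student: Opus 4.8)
The plan is to prove both implications by induction on the color $i$, mirroring the inductive structure shared by Definition~\ref{def:beingAtHome} and by the layer-by-layer color assignment of Definition~\ref{def:streamlinedDPA}. Throughout I would work with the streamlined automaton $\mathcal A$ and the unique run $\pi$ of $w$ in it, writing $c^\ast$ for its dominating color, and I would repeatedly use the fact that all language-equivalent states lie in one maximal SCC, which the streamlining lemma guarantees. The single most useful bridge between the language-level and the automaton-level pictures is the following translation, which I would establish first: a finite word $u$ is suffix language invariant after a prefix $v$ exactly when $\delta(q_0,v)\sim_{\mathcal A}\delta(q_0,vu)$, i.e.\ when reading $u$ performs a detour that returns to a state equivalent to the one it started in. Because $\mathcal A$ is streamlined, such a detour stays inside a single maximal SCC, and its only visible effect on a run of $\mathcal A_i$ is captured by the third class of transitions of Definition~\ref{def:individualAutomata} (the ``jumps'' to equivalent states). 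Suffix language invariant injections at a set of positions $J$ therefore correspond precisely to inserting SCC-internal detours at the states $\pi$ occupies at those positions.

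For the direction ``$\mathcal A_i$ accepts $w$ $\Rightarrow$ $w$ is at home in color $i$'', I would start from an accepting run of $\mathcal A_i$, which by Theorem~\ref{theo:AiGFG} may be taken to be the one produced by the good-for-games strategy. Such a run eventually takes only accepting transitions, i.e.\ after finitely many steps it performs no more equivalent-state jumps and uses only transitions of color $\ge i$ in $\mathcal A$, and it settles in a single maximal SCC $S$. I would let $J$ be the infinite set of positions at which this tail returns to one fixed recurrent state $q$ of $S$ after a color-$\ge i$ loop. Given any suffix language invariant injection at $J$, the translation above turns the resulting word $w'$ into a run of $\mathcal A$ that agrees with $\pi$ except for detours inside $S$ glued in at copies of $q$. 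Two cases then arise: either some detour realises a color strictly below $i$ infinitely often, in which case I would argue, using the induction hypothesis together with Observation~\ref{obs:inclusion}, that $w'$ is already at home in a color smaller than $i$; or no detour does, in which case the dominating color of $w'$ stays $\ge i$ and keeps the parity of $c^\ast$, so $w'$ lies in $L$ iff $w$ does, with the parity demanded by Definition~\ref{def:beingAtHome}. The reason the parity cannot silently flip is exactly the streamlined coloring: within $S$ a detour cannot witness a color of the opposite parity to $c^\ast$ without first dropping below $i$.

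For the converse ``$w$ at home in color $i$ $\Rightarrow$ $\mathcal A_i$ accepts $w$'', I would argue by contraposition. If $\mathcal A_i$ rejects $w$, then by the definition of $\mathcal A_i$ the run $\pi$ has dominating color $c^\ast<i$, realised infinitely often by a transition $t$ of color $c^\ast$ inside the final SCC $S$. Fixing any candidate witness set $J$, I would insert, at cofinitely many positions of $J$ that fall inside $S$, suffix language invariant detours chosen to route through $t$; this is possible because $t$ lies in the same SCC as the equivalent states visited at those positions, and these detours are loops back to equivalent states, hence suffix language invariant. The resulting $w'$ has dominating color $c^\ast<i$ and the parity that makes its membership in $L$ opposite to what color $i$ requires, while, by the induction hypothesis, I would ensure it is not at home in any color below $i$. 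This contradicts $w$ being at home in color $i$, completing the induction.

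I expect the main obstacle to be the faithful realisation of the language-level quantifier pattern of Definition~\ref{def:beingAtHome}, ``there exist positions such that for all inserted words'', at the automaton level: one must show both that an accepting run supplies positions robust against every suffix language invariant insertion, and that a rejecting run can be sabotaged at every proposed set of positions by inserting detours that realise the offending low color with the wrong parity. Making the ``detours realising a prescribed color'' argument precise is where the flower structure of the streamlined automaton and the exact color-lowering guarantees of the streamlining lemma must be used carefully, and it is also where the induction hypothesis is needed to absorb the injections that legitimately drop the word to a smaller natural color.
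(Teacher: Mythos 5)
Your plan founders on the difference between ``at home in color $i$'' and ``natural color at least $i$''. These are not the same predicate: the theorem (and everything the paper builds on it) concerns the latter, i.e., $w$ \emph{not} being at home in any color \emph{strictly smaller} than $i$, whereas both of your directions are formulated around the former, and both are in fact false as you state them. Take $L=$ ``infinitely many $a$'s'', recognized by the one-state streamlined DPA with an $a$-loop of color $0$ and a $b$-loop of color $1$ (so every finite word is suffix language invariant after every prefix). The word $a^\omega$ \emph{is} at home in color $1$: whatever is injected, the result still has infinitely many $a$'s and is therefore at home in color $0$, so the first clause of Definition~\ref{def:beingAtHome} holds for every injection; yet $\mathcal A_1$ \emph{rejects} $a^\omega$, since its run uses only the rejecting color-$0$ transition. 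This refutes the statement your direction 2 contraposes (``rejected $\Rightarrow$ not at home in $i$''), so no sabotage argument can succeed. Symmetrically, for the one-state DPA of the parity language over $\{0,1,2\}$, the word $2^\omega$ is accepted by $\mathcal A_1$ but is \emph{not} at home in color $1$ (the all-empty injection leaves $2^\omega$, which lies in $L$ while $1$ is odd, and is not at home in color $0$), refuting the statement of your direction 1. The paper's two substeps are crossed relative to yours: for a word of natural color at most $i-1$ it \emph{refutes by contradiction} that it can be accepted --- there the witness set $J$ comes from the assumption and the prover only has to exhibit \emph{one} injection, namely exact cycles through a color-$i$ transition of the terminal accepting SCC --- and for a rejected word it \emph{constructs} a witness set showing the word is at home in color $i-1$ (or lower), taking $J$ to be ``breakpoint'' positions, obtained by a K\"onig-style argument on the finitely branching run tree of $\mathcal A_i$, at which every run of $\mathcal A_i$ is forced through a rejecting transition.

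Even for the true statements, your technical mechanism has a hole that the paper's devices are designed to avoid. Your bridge correctly says that a suffix language invariant injection is a detour ending in a $\sim_{\mathcal A}$-\emph{equivalent} state, but your direction 1 then treats arbitrary injections as cycles glued into $\pi$ (``agrees with $\pi$ except for detours inside $S$''). That is unjustified: the detour may end in a \emph{different} equivalent state and may leave the accepting SCC $S$ of $\mathcal A_i$ altogether (the $\sim_{\mathcal A}$-class is only confined to the maximal SCC of $\mathcal A$, which can contain colors below $i$), and from then on the deterministic run can diverge from $\pi$ completely; rejoining $\pi$ inside $\mathcal A_i$ costs one rejecting jump transition per injection, so under infinitely many nontrivial injections you obtain neither an accepting run nor any bound on the dominating color. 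Handling the ``for all injections'' quantifier is exactly the crux, and the paper's two tools --- choosing the injected words itself to be genuine cycles (possible only on the side where $J$ is given and the injection is chosen), and choosing breakpoints for which rejection by $\mathcal A_i$ is provably preserved under every injection (because the set of states $\mathcal A_i$ can reach after any prefix is a full $\sim_{\mathcal A}$-class) --- are absent from your plan. Finally, your direction 2 also fails on its own terms when the rejected word's run color $c^\ast$ has the same parity as $i$ (e.g., $c^\ast = i-2$): detours through the color-$c^\ast$ transition leave the membership of $w'$ equal to that of $w$, which \emph{satisfies} the second clause of Definition~\ref{def:beingAtHome} rather than violating it, and by the induction hypothesis such a $w'$ is at home below $i$, so the first clause holds as well.
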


\begin{proof}
\underline{Induction basis:}
For $i=0$, every word is accepted by $\mathcal A_0$.
\smallskip

\noindent%
\underline{Induction step:}
Let us assume that the property holds for all $i'< i$ for some $i>0$.

{For the induction step, we split the ``if and only if'' in the claim into its two directions.}
We first show that (\emph{substep 1}) a word $w = w_0w_1 \ldots$ with natural color {at most} $i-1$ is rejected by $\mathcal A_i$, and then argue that (\emph{substep 2}) a word rejected by $\mathcal A_i$ has a natural color of at most $i-1$. {Taking both directions together and considering the remaining words (those \emph{accepted} by $\mathcal{A}_i$ rather than those \emph{rejected} my $\mathcal{A}_i$), we obtain that a word is accepted by $\mathcal{A}_i$ if, and only if, its natural color is at least $i$, which is to be proven.} 

\underline{Substep 1:} {Here, we show that a word $w = w_0w_1 \ldots$ with natural color of at most $i-1$ is rejected by $\mathcal A_i$.
If the natural color of $w$ is strictly smaller than $i-1$, then this follows directly from the inductive hypothesis and Observation~\ref{obs:inclusion}. For the case of the natural color of $w$ being exactly $i-1$, we assume for contradiction that $w$ has a natural color of $i-1$ and $w$ is accepted by $\mathcal{A}_i$. Using the definitions for acceptance by a co-Büchi automaton and the natural color of a word, we have that}
\begin{itemize}
\item $\pi = q_0q_1q_2\ldots$ is an accepting run of $\mathcal A_i$ on $w$,
\item $p \in \NN$ is a position such that, for all $j\geq p$, $(q_j,w_j,q_{j+1},2)\in\delta_i$ is an accepting transition in $\mathcal A_i$, and
\item $J \subseteq \NN$ is an infinite index set such that injecting suffix language invariant words at the positions in $J$ always results in a word $w'$ that
\begin{description}
    \item[(c1)] has natural color {$< i-1$ \emph{or}}
    \item[(c2)] is accepted by $\mathcal A$ if, and only if, {$i-1$} is even,
\end{description}
where we assume 
w.l.o.g.\ $j \geq p$ for all $j \in J$.
\end{itemize}
The accepting run $\pi$ has, from position $p$ onward, only transitions that have color $\geq i$ {in $\mathcal A$.
Let $p' \geq p$ be the position from which these transitions are all in the same maximal accepting SCC $S$ in $\mathcal{A}_i$.
By the assumption that $\mathcal{A}$ is streamlined (which is a precondition to applying Definition~\ref{def:individualAutomata}),
the maximal accepting SCC (of $\mathcal{A}_i$) has a transition that has a corresponding transition of color $i$ in $\mathcal{A}$.
To see this, note that we can only have an accepting SCC in $\mathcal{A}_i$ if it is also an SCC in the graph $\mathcal{G}$ built by the streamlining construction when starting to consider color $i$ (as all transitions from and to states not in $\mathcal{G}$ {at that point of the construction have been assigned colors strictly smaller than $i$} in the streamlined automaton).
But then, either the minimal transition color in the SCC has the same parity as $i$, and then it is lowered to $i$ in the streamlining construction, or it does not. In the latter case, the streamlining construction lowers the color of such a minimal transition color in the SCC to $i-1$ by the third step of the construction before actually considering color $i$, contradicting the assumption that the SCC has only accepting transitions in $\mathcal{A}_i$ (as transitions with color smaller than $i$ are not accepting in $\mathcal{A}_i$ by Def.~\ref{def:individualAutomata}).

Since we now know that the minimal color in the maximal accepting SCC in $\mathcal{A}_i$ is $i$} {in the streamlined automaton $\mathcal A$,}
we can insert into $w$, in every position in $j \in J$, a suffix language invariant string, whose partial run is a cycle in both $\mathcal{A}_i$ and $\mathcal{A}${, in the latter case with minimal color $i$}.
Therefore, the resulting word {$w'$} is accepted by $\mathcal A$ if, and only if, $i$ is even.
{As $i$ and $i-1$ have a different parity, condition (c2) cannot hold for $w'$.

However, condition (c1) also cannot hold: The accepting run of $\mathcal A_i$ on $w'$ is also an accepting run of $\mathcal A_{i-1}$, so its natural color is at least $i-1$ by our inductive hypothesis.
Taking the falsification of both (c1) and (c2) together, we obtain that $w$ cannot have a natural color of at most $i-1$.} \hfill 
 (contradiction)
\smallskip

\underline{Substep 2:}

Finally, we show that a word $w = w_0w_1 \ldots$  that is rejected by $\mathcal A_i$ has natural color of at most $i-1$. 
{If the word is rejected by $\mathcal{A}_{i-1}$, then the natural color is at most $i-2$ by our inductive hypothesis, and there is nothing more to be shown.
So we henceforth only need to consider the case that $w$ is accepted by $\mathcal{A}_{i-1}$ but not $\mathcal{A}_i$.
}
We define a suitable infinite set of indices $J\subseteq\NN$.

We first assume for contradiction that there is a position $p>0$ in the word such that there is no position $p'>p$ such that there is a run $\pi = q_0q_1 \ldots$ of $\mathcal A_i$ where $(q_j,w_j,q_{j+1},2)\in\delta_i$ is, for all $p\leq j <p'$, an accepting transition in $\mathcal A_i$.
If no such position exists, the finitely branching tree of runs of $\mathcal A_i$ on $w$ that is pruned at all non-accepting positions after level $p$ is infinite, and therefore has an infinite path.
\hfill \mbox{(contradiction to $w \notin \mathcal L(\mathcal A_i)$)}

Using this observation, we fix an infinite ascending chain $0 < p_0 < p_1 < p_2 \ldots$, such that, for all $j \geq 0$, no run has only accepting transitions in any segment $q_{p_j} q_{p_j+1}\ldots q_{p_{j+1}}$. 

We note that inserting suffix language invariant strings in positions of $J= \{p_j \mid j\in \NN\}$ does not change that these segments have this property; consequently, any word $w'$ that results from such insertions is still rejected by $\mathcal A_i$.
We fix such a word $w'$.

Let $c$ be the maximal color such that $w'= w_0'w_1'w_2'\ldots$ is in the language of $\mathcal A_c$. As $w' \notin \mathcal L(\mathcal A_i)$, $c<i$. If $c < i-1$, then its natural color is $c < i-1$ by induction hypothesis.
If it is $c=i-1$, then the natural color must be at least $i-1$ by induction hypothesis.
Moreover, $\mathcal A_{i-1}$ has an accepting run $\pi'=q_0q_1'q_2' \ldots$ on $w'$.
Let $p$ be a position in this run such that, for all $j>p$, the transitions $(q_j',w_j',q_{j+1}',2)\in\delta_{i-1}$ are accepting transitions of $\mathcal A_{i-1}$.
Noting that $\pi'$ is a rejecting run of $\mathcal A_i$, this entails that the lowest color that occurs infinitely often in the run $q_pq_{p+1}q_{p+2}\ldots$
of $\mathcal A_{q_p}$ on $w_pw_{p+1}w_{p+2}\ldots$ is $i-1$.
Thus $w_pw_{p+1}w_{p+2}\ldots$ is accepted by $\mathcal A_{q_p}$, and therefore $w'$ is accepted by $\mathcal A$ if, and only if, $i$ is odd.

This concludes the proof that the natural color of $w$ is $i-1$.
\end{proof}

{Taking the results above together, we have obtained a construction for {a language recognised by a given deterministic parity automaton that provides} a sequence of co-Büchi automata that encode which word is at home in which color:
We first compute a structured form of this deterministic parity automaton, then streamline it (Def.~\ref{def:streamlinedDPA}), and finally split the resulting parity automaton into the co-Büchi automata according to Def.~\ref{def:individualAutomata}. All three steps can be implemented to run in time polynomial in the size of the input automaton. Furthermore, since the state spaces of the co-Büchi automata are the same as the one in the parity automaton, they cannot be larger than the original parity automaton.

Since the split into co-Büchi automata is canonical, and the co-Büchi automata themselves can be made canonical (and minimal) using the existing polynomial-time construction from Abu Radi and Kupferman \cite{DBLP:conf/icalp/RadiK19}, we overall obtain a canonical representation of the language that the parity automaton we started with represents{. Moreover,} in can be computed in polynomial time.
}

\subsection{Reading off Natural Colors from Streamlined Automata}

{The construction so far has the property that it does not immediately provide a direct way of computing the natural color of a word (yet). Given a word, we can check which of the co-Büchi automata built according to Def.~\ref{def:individualAutomata} accepts the word to compute the natural color of a word (for the given language), but since they are good-for-games rather than deterministic, this is somewhat cumbersome. 

The results above however allow us to also define a more direct way to determining the natural color of a word (w.r.t~a given language), as we show below as a side-result. 
}

We define a \emph{co-run} of a deterministic automaton $\mathcal A = (Q,\Sigma,\delta,q_0)$ on a word
$w = w_0 w_1 w_2 \ldots$
with a run $\pi = q_0 q_1 q_2 \ldots$
as a sequence $\pi' = q_0 q_1 \ldots q_{p-1} q_p q_{p+1}' q_{p+2}' q_{p+3}' \ldots $ for some $p>0$, such that
$\pi'' = q_p' q_{p+1}' q_{p+2}' q_{p+2}' \ldots$ is the run of $\mathcal A_{q_p'}$ on the word $w' = w_p w_{p+1} w_{p+2} w_{p+2} \ldots$ for some state $q_p'$ that is language equivalent to $q_p$ ($q_p' \sim_{\mathcal A} q_p$).

The color of the set of co-runs for a word $w$ is defined to be the maximal color $c$ that occurs infinitely often on some co-run of $w$.

\begin{lemma}
Let $\mathcal A$ be a streamlined automaton. Then the color of the set of co-runs of a word $w$ is $c$ if $w$ is in the language of $\mathcal A_c$, but not in the language of $\mathcal A_{c+1}$.
\end{lemma}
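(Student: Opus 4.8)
The plan is to read the color of the set of co-runs as the \emph{maximal} dominating color attained by any co-run (the dominating color of a co-run being the smallest color occurring infinitely often along it, as for any parity run), and to show this equals the largest $c$ with $w\in\mathcal L(\mathcal A_c)$. The pivotal equivalence I would establish is: $w\in\mathcal L(\mathcal A_c)$ if, and only if, there is a co-run of $w$ whose dominating color is at least $c$. Granting it, the monotone chain $\mathcal L(\mathcal A_0)\supseteq\mathcal L(\mathcal A_1)\supseteq\cdots$ from Observation~\ref{obs:inclusion} forces $\{c : w\in\mathcal L(\mathcal A_c)\}$ to be a downward-closed set $\{0,\ldots,c^\star\}$, so the maximal dominating color over all co-runs is exactly the unique $c^\star$ with $w\in\mathcal L(\mathcal A_{c^\star})$ and $w\notin\mathcal L(\mathcal A_{c^\star+1})$, which is the claim. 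The set of co-runs is nonempty, since the deterministic run itself is a trivial co-run with $q_p'=q_p$, so this maximum is well defined.

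The technical core is a language-invariance property of runs of $\mathcal A_c$: if $q_0q_1\ldots$ is the deterministic run of $\mathcal A$ on $w$ and $s_0s_1\ldots$ is any run of $\mathcal A_c$ on $w$, then $s_k\sim_{\mathcal A}q_k$ for every $k$. I would prove this by induction on $k$, with base case $s_0=q_0$; in the step, a transition of the first two kinds in Definition~\ref{def:individualAutomata} is the deterministic $\mathcal A$-transition, and $\delta$ preserves $\sim_{\mathcal A}$, while a transition of the third kind moves to a state explicitly chosen to be $\sim_{\mathcal A}$-equivalent to the deterministic successor, so $s_{k+1}\sim_{\mathcal A}\delta(s_k,w_k)\sim_{\mathcal A}\delta(q_k,w_k)=q_{k+1}$. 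With this in hand the backward direction of the pivotal equivalence is immediate: given a co-run with dominating color $\geq c$, all but finitely many of its transitions have $\mathcal A$-color $\geq c$, so reading it as a run of $\mathcal A_c$ (its single language-equivalent jump being a rejecting transition) yields a run eventually confined to accepting transitions, hence an accepting run witnessing $w\in\mathcal L(\mathcal A_c)$.

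For the forward direction I would take an accepting run $\sigma$ of $\mathcal A_c$ on $w$. Being accepting in a co-Büchi automaton, $\sigma$ takes only finitely many rejecting transitions; since both the $\mathcal A$-color-$<c$ transitions and the added jump transitions are rejecting, $\sigma$ makes only finitely many jumps and, after its last rejecting transition at some position $p$, follows $\mathcal A$ deterministically with colors $\geq c$ throughout. By the invariance property the state $s_p$ reached there satisfies $s_p\sim_{\mathcal A}q_p$, so the sequence that follows the \emph{deterministic} $\mathcal A$-run up to $q_p$, jumps once to $q_p'=s_p$, and then runs $\mathcal A_{s_p}$ on $w_pw_{p+1}\ldots$ is a genuine co-run whose tail agrees with that of $\sigma$; its dominating color is therefore $\geq c$. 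I expect the main obstacle to be exactly this conversion of an arbitrary accepting $\mathcal A_c$-run, which may perform several jumps, into a co-run, which is allowed only a \emph{single} jump: the invariance lemma is what lets us collapse all earlier jumps by restarting the deterministic $\mathcal A$-run and jumping only once, at the position of the final jump. A minor additional point is the boundary case where $c$ is the largest color of $\mathcal A$, where $\mathcal A_{c+1}$ should be read as rejecting every word.
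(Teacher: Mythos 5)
Your proposal is correct and takes essentially the same route as the paper's (very terse) proof: both directions rest on converting a co-run with dominating color $\geq c$ into an accepting run of $\mathcal A_c$, and conversely turning an accepting run of $\mathcal A_c$ into a co-run by following the deterministic run of $\mathcal A$ until the position of the last rejecting transition and jumping once to the language-equivalent state reached there. The only difference is that you explicitly state and prove (by induction) the invariance property that every run of $\mathcal A_c$ stays $\sim_{\mathcal A}$-equivalent to the deterministic run, which the paper's proof uses implicitly without justification when it asserts that the jump target is language equivalent to $q_p$.
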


\begin{proof}
A co-run of $\mathcal A$ on $w$ with dominating color $c$ is an accepting run of $\mathcal A_c$.

An accepting run $\pi' = q_0' q_1' \ldots q_{p-1}' q_p' q_{p+1}' q_{p+2}' q_{p+2}' \ldots$ of $\mathcal A_{c+1}$ has some position $p$ from which point onward only accepting transitions (which all have color $>c$ in $\mathcal A$) are taken.
$\mathcal A$ therefore has a co-run $\pi' = q_0 q_1 \ldots q_{p-1} q_p q_{p+1}' q_{p+2}' q_{p+2}' \ldots $, whose dominating color is $>c$.
\hfill (contradiction)
\end{proof}

By combining this lemma with the previous theorem, we get the following corollary.

\begin{corollary}
Let $\mathcal A$ be a streamlined automaton. Then the color of the set of co-runs of a word $w$ is its natural color for $\mathcal L(\mathcal A)$.
\end{corollary}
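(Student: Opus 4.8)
The plan is to derive the corollary directly from the preceding lemma and the preceding theorem, with Observation~\ref{obs:inclusion} supplying the monotonicity that glues them together; no fresh construction is needed, only a matching-up of the two characterisations of the color $c$.

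First I would fix a word $w$ and invoke the preceding lemma, which hands me a color $c$ that is simultaneously the color of the set of co-runs of $w$ and the index at which $w$ leaves the chain of languages, i.e.\ $w \in \mathcal{L}(\mathcal{A}_c)$ but $w \notin \mathcal{L}(\mathcal{A}_{c+1})$. Observation~\ref{obs:inclusion} guarantees that this $c$ is well defined and unique: the languages $\mathcal{L}(\mathcal{A}_0) \supseteq \mathcal{L}(\mathcal{A}_1) \supseteq \ldots$ form a descending chain with $\mathcal{A}_0$ accepting every word, and $\mathcal{A}$ has only finitely many colors, so membership in the chain is downward closed in the index and drops exactly once.

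Next I would translate the two conditions on $c$ into statements about the natural color of $w$ using the preceding theorem, which states that $w \in \mathcal{L}(\mathcal{A}_i)$ holds if and only if the natural color of $w$ for $\mathcal{L}(\mathcal{A})$ is at least $i$. Applying this to $w \in \mathcal{L}(\mathcal{A}_c)$ gives that the natural color is $\geq c$, while applying it to $w \notin \mathcal{L}(\mathcal{A}_{c+1})$ gives that the natural color is $< c+1$, hence $\leq c$. Putting both inequalities together, the natural color of $w$ equals $c$, which is exactly the color of the set of co-runs delivered by the lemma. This proves the corollary.

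I do not expect a substantial obstacle, since all the real work already sits inside the lemma and the theorem; the argument is purely a combination. The only place that calls for a little care is the top of the chain: when $c$ is the largest color occurring in $\mathcal{A}$, the automaton $\mathcal{A}_{c+1}$ is not literally provided by Definition~\ref{def:individualAutomata}, and one should read ``$w \notin \mathcal{L}(\mathcal{A}_{c+1})$'' under the convention that every color exceeding all colors of $\mathcal{A}$ turns all transitions rejecting, so that $\mathcal{A}_{c+1}$ accepts nothing. Under this convention the upper bound ``natural color $\leq c$'' still follows, and the monotonicity of Observation~\ref{obs:inclusion} ensures the co-run color and the last accepting index agree.
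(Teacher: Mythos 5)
Your proposal is correct and follows essentially the same route as the paper, which derives the corollary exactly as you do: the lemma pins down the co-run color $c$ via $w \in \mathcal{L}(\mathcal{A}_c)$ and $w \notin \mathcal{L}(\mathcal{A}_{c+1})$, and the theorem translates these two facts into ``natural color $\geq c$'' and ``natural color $\leq c$''. Your extra care about the top of the chain (where $\mathcal{A}_{c+1}$ is not literally provided by Definition~\ref{def:individualAutomata}) addresses an edge case the paper glosses over, and your convention resolves it correctly.
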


{Co-runs are closely related to the GCAs we have defined earlier.
The difference is that the ``new'' transitions to language equivalent states can be used only once along a run.
This allows for having a definition on the deterministic automaton (without falling back on good-for-games automata), and is therefore simpler.
It also binds the proofs together: the \emph{minimal prefix} from the proof of Theorem \ref{theo:AiGFG} corresponds to the shortest prefix, at which this single transition to a language equivalent state can be taken.
While this provides a more direct connection to the color, the restriction to taking these transitions at most once loses the good-for-games property: as a wrong decision cannot be corrected, access to the remainder of the run may be required.
This makes GCAs more attractive, as co-B\"uchi languages have canonical representatives.}

{Note that for an ultimately periodic word (i.e., a word of the form $w = uv^\omega$ for $u,v \in \Sigma^*$, the highest color among the co-runs can be computed in the time polynomial in the number of states, which allows reading off the natural color of a word from a (streamlined) DPA without building the canonical representation of the language of the DPA. }

\section{Related Work}

There already exists an indirect normal form of $\omega$-languages.
Every $\omega$-regular language can be represented as a deterministic finite automaton (DFA): This DFA accepts words of the form $u\$v$, for which the \emph{ultimately periodic word} $uv^\omega$ is in the $\omega$-language to be represented. Calbrix et al.~\cite{DBLP:conf/mfps/CalbrixNP93} showed how to compute such a DFA from a given nondeterministic Büchi automaton (to which a deterministic parity automaton is easy to translate). The minimized DFA for this \emph{lasso language} over finite words can then serve as a canonical representation of the $\omega$-language, as two automata representing $\omega$-regular languages encode the same language if, and only if, they accept the same ultimately periodic words.

{DFAs that capture ultimately periodic words have furthermore been refined to \emph{families} of DFAs
\cite{DBLP:conf/alt/AngluinF14} that can be more succinct and share the property to consist of multiple sub-automata with the sequences of co-Büchi automata that we define in this paper.

Such DFAs (or families of DFAs) however do not implement a core idea of automata,} namely to read a word letter-by-letter and to encode the relevant information about the letters of the word already read in a state. It is also unknown how the sizes of {such automata relates to the size of a minimal parity automaton representing the language. Finally, neither DFAs for lasso languages nor families of DFAs have a direct connection to the complexity of a language, as it is, for example, captured by the minimal number of colors used in deterministic parity automata. }

\section{Conclusion}
\label{sec:conclusion}

A classical question in the theory of automata is how to define \emph{canonical automata}.

In this paper, we have taken a step back, and looked at the question of how to define canonical automata for $\omega$-regular languages from a new angle.
For a language to be defined canonically by, say, a canonical deterministic parity automaton, a word first and foremost needs a natural color.
What should this color be?

Picking the flowers of Niwińkski and Walukiewicz \cite{DBLP:conf/stacs/NiwinskiW98} as a starting point, we have lifted the same principle to languages in a way that is oblivious to the automaton used.

For an $\omega$-regular language $L$, we look at a set of languages $L_0 \supset L_1 \supset L_2 \ldots \supset L_c$ s.t.
\begin{itemize}
    \item $L_0$ is the universal language;
    \item $L_1$ is the largest co-B\"uchi language contained in $\overline{L}\cap L_0$, closed under insertion;
    \item $L_2$ is the largest co-B\"uchi language contained in $L \cap L_1$, closed under insertion;
    \item $L_3$ is the largest co-B\"uchi language contained in $\overline{L}\cap L_2$, closed under insertion; etc.
\end{itemize}

The \emph{closure under insertion} refers to the existence of an infinite set of positions at which suffix language invariant words can be added such that \emph{if} the resulting word is in $L_i$ \emph{then} it is in $L$ if, and only if, $i$ is even.

This assigns a \emph{color} to each infinite \emph{word} $w$, both in the language and outside of it, purely defined by the maximal $i$ such that $w \in L_i$---we say that $w$ has a natural color of $i$ (for $L$).
As one would expect for a parity condition, $w \in L$ if, and only if, the natural color $i$ of $w$ is even.

The natural color of $w$ for $L$ is thus defined without reference to an automaton (or any other representation of the language).
Yet, $L$ is recognized by a parity automaton with maximal color $c$ if, and only if, $L_{c+1}$ is empty.
This sets the minimal maximal color in the automaton to the maximal $i$ such that $L_i$ is non-empty, which further connects this construction to deterministic automata.

We infer these languages by turning a single \emph{streamlined} deterministic parity automaton, which is cheap and easy to obtain from any deterministic parity automaton that recognizes $L$%
{: $L_i$ contains the set of words whose co-runs have  colors of at least $i$}.

{Beyond providing evidence \emph{that} a word is in the language, it also provides insight into \emph{why} it is part of this language by peeling off co-B\"uchi languages of accepted and rejected words layer by layer.}

Returning to our chain of languages, this answers the `why co-B\"uchi?' question that begs to be asked.
{Each $L_c$ is a co-B\"uchi language, which is an ideal basis for} a natural representation, because co-B\"uchi languages have recently obtained a canonical representation, albeit not for deterministic automata, but for \emph{good-for-games co-B\"uchi automata} with transition-based acceptance (GCAs).
We can use this to obtain a natural representation for the languages that allow us to identify the natural color of a word.

\bibliography{bib}

\end{document}